\newtheorem{theorem}{Theorem}
\newcommand{\R}{\mathbb{R}}
\newcommand{\Rn}{\R^{n}}
\newcommand{\Rnn}{\R^{n \times n}}
\newcommand{\abs}[1]{|#1|}
\newcommand{\ap}[1]{\hat{#1}}
\newcommand{\norm}[1]{\|#1\|}
\newcommand{\normtwo}[1]{\|#1\|_{2}}
\newcommand{\norminf}[1]{\|#1\|_{\infty}}
\newcommand{\diag}{\mathrm{diag}}
\newcommand{\trans}[1]{#1^{\top}}
\journal{Journal of Computational and Applied Mathematics}
\begin{document}

\begin{frontmatter}



\title{An a posteriori verification method for generalized real-symmetric eigenvalue problems in large-scale electronic state calculations}


\author[Tottori_U]{Takeo Hoshi}
\author[TWCU]{Takeshi Ogita}
\author[SIT1]{Katsuhisa Ozaki}
\author[SIT2]{Takeshi Terao}

\address[Tottori_U]{Department of Applied Mathematics and Physics, Tottori University, Japan}
\address[TWCU]{Division of Mathematical Sciences, Tokyo Woman's Christian University, Japan}
\address[SIT1]{Department of Mathematical Sciences, Shibaura Institute of Technology, Japan}
\address[SIT2]{Graduate School of Engineering and Science, Shibaura Institute of Technology, Japan}

\begin{abstract}
An a posteriori verification method is proposed for the generalized real-symmetric eigenvalue problem
and is applied to densely clustered eigenvalue problems
in large-scale electronic state calculations. 
The proposed method is realized by a two-stage process in which the approximate solution is computed by existing numerical libraries and is then verified in a moderate computational time.
The procedure returns intervals containing one exact eigenvalue in each interval. 
Test calculations were carried out for organic device materials, and 
the verification method confirms that all exact eigenvalues are well separated in the obtained intervals.
This verification method will be integrated into
EigenKernel (\url{https://github.com/eigenkernel/}), 
which is middleware for various parallel solvers for the generalized eigenvalue problem.
Such an a posteriori verification method will be important in future computational science.
\\
(c) 2005 Elsevier B.V. All rights reserved.
\end{abstract}

\begin{keyword}
verification method, generalized real-symmetric eigenvalue problem,
electronic state calculation, supercomputer, 

\MSC 65F15 \sep 65G20

\end{keyword}

\end{frontmatter}



\section{Introduction \label{SEC-INTRO}}

A crucial issue in verification methods is application to large-scale scientific or industrial computations on supercomputers. 
Many numerical solvers have been proposed for modern massively parallel supercomputers,
and application researchers would like to compare solvers both in terms of computational speed
and reliability. The concept of a posteriori verification methods is proposed in order to meet the needs of application researchers.

A posteriori verification methods have the workflow shown in Fig.~\ref{FIG-SCHEMATIC}.
An approximate solution is first obtained and then verified. 
The former and latter procedures are referred to as a solver and a verifier, respectively.
The goal of the present study is to integrate the verifier routine, as an optional function, to existing numerical solver libraries.

\begin{figure}[h]
\begin{center}
  \includegraphics[width=0.6\textwidth]{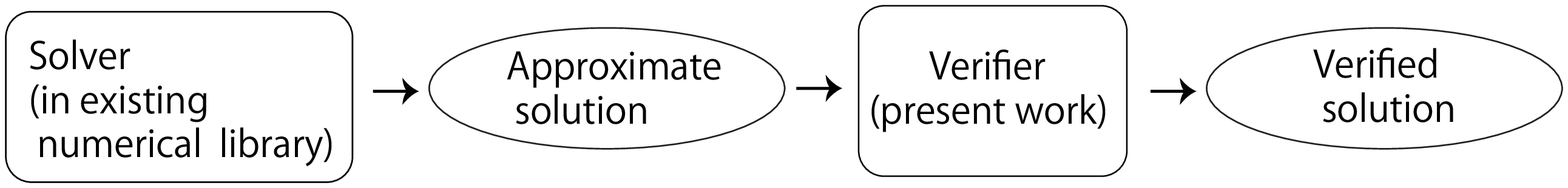}
\end{center}
\caption{Schematic diagram of the workflow with an a posteriori verification method. 
}
\label{FIG-SCHEMATIC}       
\end{figure}

The present research is motivated by large-scale electronic state calculation,
a major field in computational material science and engineering.
As explained in \ref{SEC-GHEV}, a mathematical model is used
for the fundamental Schr\"{o}dinger-type equation, and the problem is reduced to 
the generalized real-symmetric matrix eigenvalue problem 
\begin{eqnarray}
 A x_k = \lambda_k B x_k
 \label{EQ-QM-GEP}
\end{eqnarray}
under the generalized orthogonality condition  
\begin{eqnarray}
 x_i^{\rm T} B   x_j = \left\{\begin{array}{ll} 1 & \mathrm{if} \ i = j \\ 0 & \mathrm{otherwise} \end{array}\right.,
 \label{EQ-ORTHO-CON}
\end{eqnarray}
where both $A$ and $B$ are real-symmetric $n \times n$ matrices, with $B$ being positive definite. 
Here, we assume that
\[
\lambda_1 \le \lambda_2 \le \dots \le \lambda_n .
\]
Applying our results to problems with complex Hermitian matrices is straightforward. 

For large-scale electronic state calculation,
many eigenvalues are densely clustered or almost degenerate,  
and distinguishing them numerically may be difficult. 
In order to obtain reliable results, we consider verification methods for generalized eigenvalue problems.
For the sake of completeness as verification methods, we also need to take into account all numerical errors that occur when matrices $A$ and $B$ are generated from the fundamental Schr\"{o}dinger-type equation.
Although we do not consider the fundamental Schr\"{o}dinger-type equation in detail herein, we briefly discuss this equation in \ref{SEC-GHEV}.

One of the authors (T. H.) developed 
a middleware EigenKernel \cite{EIGENKERNEL-URL, IMACHI2016-JIP, 2018TANAKA} 
with various parallel solvers for generalized eigenvalue problems and plans to add a verifier routine. 
The total elapsed time $T_{\rm tot}$ is 
the sum of the times for solver 
$T_{\rm sol}$ and verifier $T_{\rm veri}$
($T_{\rm tot} = T_{\rm sol} + T_{\rm veri}$).
We attempt to construct the verifier algorithm
so that the time for the verifier gives a moderate fraction ($T_{\rm veri} \le T_{\rm sol}$). 
Since the verifier can use the highly optimized routines of matrix multiplication, the verifier is suitable for high-performance computing on supercomputers.

In the solver procedure,
approximate solutions $(\ap{\lambda}_k, \ap{x}_k)$, $k = 1, 2, \dots, n$, such that
\begin{eqnarray}
 A \ap{x}_k \approx \ap{\lambda}_k B \ap{x}_k
 \label{EQ-QM-GEP-APPROX}
\end{eqnarray}
are obtained by any numerical solver algorithm.
A verifier procedure gives the difference between the exact and approximate solutions, 
such as $|\lambda_k - \ap{\lambda}_k|$ or 
$\norm{x_k - \ap{x}_k}$. 
If the relation 
$|\lambda_k - \ap{\lambda}_k| \le r_k$ is obtained with a given positive number $r_k$, for example,
this indicates that the exact solution ($\lambda_k$) lies 
in a disk having a center and radius of $\ap{\lambda}_k$ and $r_k$, respectively. 
For this purpose, a number of enclosure methods have been developed, e.g., \cite{Ya2001,MiOgRuOi2010,Mi2012}.
In the present paper, we propose a method of enclosing all eigenvalues that is straightforward, efficient, and easy to implement on supercomputers.
The proposed method is based on Yamamoto's theorem \cite{Ya1984} and is essentially the same as the method proposed in a previous paper \cite{Mi2012}.
In other words, we specialize the previous method \cite{Mi2012} for generalized real-symmetric eigenvalue problems.
Note that it is not possible in general to state that a method is better or worse than other methods because this depends on the purpose.
We compare the advantages and disadvantages of these enclosure methods in Section~\ref{SEC-VERIF-METHODS}.

The a posteriori verification strategy is important mainly with regards to three aspects. 
First, numerical methods for the densely clustered eigenvalue problem 
have potential difficulties in computing reliable numerical solutions,
as explained above. 
Second, 
various numerical algorithms have been proposed for efficient parallel computations that are suitable for current and next-generation supercomputers.
Application researchers would like to compare these methods 
with respect to both computational speed and numerical reliability.  
Third, 
the emergence of machine learning has enhanced 
the design of computer architecture  
for the acceleration of low-precision (single- or half-precision) calculation. 
The efficient use of low-precision calculation,  
typically in mixed-precision calculation, 
will be important in any high-performance computational science field \cite{DONGARRA2018-HPCASIA, ALVERMANN2018}.
A posteriori verification methods guarantee satisfactory numerical reliability when low-precision calculation is used.

The remainder of the present paper is organized as follows.
Section ~\ref{SEC-BACKGROUND} explains 
the physical and mathematical backgrounds.
The proposed verification method and 
numerical examples
are presented in Sections \ref{SEC-VERIF-METHODS} and \ref{SEC-NUM-EXAMPLE}, respectively. 
Section \ref{SEC-SUMMARY} presents a summary and an outlook for future research.

\section{Background \label{SEC-BACKGROUND}}

\subsection{Large-scale electronic state calculation and densely clustered eigenvalue problem  \label{SEC-CLUSTERED}}

The present electronic state calculation is briefly introduced in \ref{SEC-GHEV}. 
The matrix size $n$ is approximately proportional to the number of the atoms, molecules, or electrons in the material. 
An eigenvalue $(\lambda_k)$ or an eigenvector  $(x_k)$ indicates the energy and the wavefunction, respectively, of an electron.

The present research is motivated, in particular, 
by a previous study \cite{HOSHI2018-PENTA},
in which we focused on the participation ratio \cite{BELL1970,1996FUJIWARA}
defined for a vector $v \equiv (v_1, v_2, ...., v_n)$, as
\begin{eqnarray}
 P \equiv P(v) = \left( \sum_{j=1}^n |v_j|^4 \right)^{-1}.
 \end{eqnarray}
The participation ratio is a measure of the spatial extension of the electronic wavefunction
and governs the electronic device properties. A dense eigenvector, {\it i.e.}, a vector that has only a few components that are negligible in terms of absolute value, has a large participation ratio.
The corresponding electronic wavefunction is extended through the material
and can contribute to electrical current. 
A sparse eigenvector, {\it i.e.}, a vector that has only a few components that are large in terms of absolute value, indicates a small participation ratio.
The corresponding electronic wavefunction is localized in the material
and cannot contribute to electrical current.
An interesting research target in a large-scale problem is
an \lq intermediate' electronic wavefunction or a wavefunction that 
shows intermediate properties between extended and localized wavefunctions.
Such \lq intermediate' wavefunctions appear, for example, in
Fig. 1 of Ref.~\cite{1996FUJIWARA} or Fig. 3 of Ref.~\cite{HOSHI2018-PENTA}.

The densely clustered eigenvalue problem in (\ref{EQ-QM-GEP}) appears among large-scale calculations
and is illustrated in Fig.~\ref{FIG-CLUSTERED}.
In the problem, 
the difference of sequential eigenvalues $\delta_k \equiv \lambda_{k+1} - \lambda_{k}$, $k = 1, 2, \dots, n-1$,
tends to be proportional to $1/n$ $(\delta_k  \propto 1/n)$.
Consequently, many eigenvalues are densely clustered or almost degenerate 
($\delta_k \rightarrow 0$)
in a large-matrix problem $(n \rightarrow \infty)$ 
and distinguishing these eigenvalues numerically may be difficult. 

It is crucial to distinguish each eigenvalue numerically among densely clustered eigenvalues,
because the participation ratio and other physical quantities are defined for each eigenvector. 
If two calculated 
eigenvalues $\hat{\lambda}_{k}$ and $\hat{\lambda}_{k+1}$ cannot be distinguished in the numerical calculation,
or if the two eigenvalues are recognized, unphysically, to be degenerate, 
then the corresponding eigenvectors $\hat{x}_k$ and $\hat{x}_{k+1}$ cannot be defined uniquely. 
In this case, the participation ratio values $P(\hat{x}_k)$ and $P(\hat{x}_{k+1})$ are not defined uniquely and any discussion of these values will be meaningless.

\begin{figure}[h]
\begin{center}
  \includegraphics[width=0.4\textwidth]{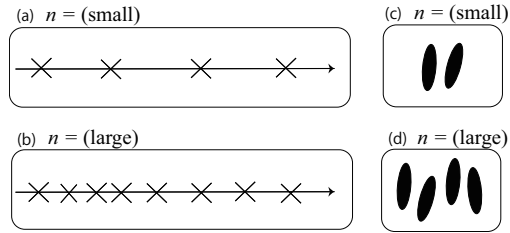}
\end{center}
\caption{Schematic diagram of a densely clustered eigenvalue problem in large-scale electronic state calculations. 
(a), (b) Eigenvalue distribution in (\ref{EQ-QM-GEP}) with (a) small or (b) large matrix size $n$. 
A cross indicates an eigenvalue on the real axis. 
The difference of sequential eigenvalues tends to be proportional to $1/n$.
(c), (d) Materials with (c) small or (d) matrix size $n$.
Ovals indicate molecules. 
The matrix size $n$ is proportional to the size of the molecules $n_{\rm mol}$ 
($n \propto n_{\rm mol}$). 
}
\label{FIG-CLUSTERED}       
\end{figure}


\subsection{Numerical solvers for the generalized eigenvalue problem \label{SEC-SOLVER-GEV}}

Here, an overview is given for 
the parallel dense-matrix solver of the generalized eigenvalue problem of (\ref{EQ-QM-GEP}), 
in particular, for the variety of used algorithms. 
The solver algorithm for (\ref{EQ-QM-GEP}) consists of four procedures: 
(i) Cholesky decomposition of $B$ 
\begin{eqnarray}
B=R^{\top}R, 
\end{eqnarray}
with an upper triangular matrix $R$, 
(ii) reduction to the standard eigenvalue problem (SEP)
\begin{eqnarray}
A^{\prime}y_k=\lambda_k y_k, 
\label{EQ-RED-SEP}
\end{eqnarray}
with 
\begin{eqnarray}
A^{\prime} \equiv R^{-\top}AR^{-1}, 
\end{eqnarray}
(iii) solution of the standard eigenvalue problem (\ref{EQ-RED-SEP}),
and (iv) transformation of the eigenvectors 
\begin{eqnarray}
x_k = R^{-1}y_k. 
\end{eqnarray}
The set of procedures (i), (ii), and (iv) is referred to as 
reducer, and procedure (iii) is referred to as the SEP solver.

Although ScaLAPACK \cite{SCALAPACK-URL, SCALAPACK-BOOK} is the {\it de facto} standard parallel numerical library, 
this library was developed mainly in the 1990s, and several routines exhibit severe bottlenecks 
on modern massively parallel supercomputers. 
Novel solver libraries of ELPA \cite{ELPA-URL, ELPA2014} and EigenExa \cite{EIGENEXA-URL,EigenExa-PAPER} were proposed in order to overcome the bottlenecks. 
The ELPA code was developed in Europe under tight collaboration
between computer scientists and material science researchers, and
its main target application is FHI-aims \cite{FHI-AIM-URL, FHI-AIM-PAPER}, 
which is a well-known electronic state calculation code. The
EigenExa code, on the other hand, was developed at RIKEN in Japan. Importantly, the ELPA code has routines optimized for x86, IBM 
Blue-Gene, and AMD architectures, 
whereas the EigenExa code was developed to be optimal mainly on the K computer, which is
a Japanese flagship supercomputer.
Both ScaLAPACK and ELPA provide the reducer routines, 
and all of ScaLAPACK, ELPA, and EigenExa provide the SEP solver routines.

Since the computational performance depends both on the problem and the architecture, 
it is, in principle, possible to construct
a `hybrid' workflow in which the reducer routine is chosen from one library and the SEP solver routine is chosen from another library,
so as to realize optimal performance.
The middleware EigenKernel was developed in order to realize such hybrid workflows. 
An obstacle to realizing the hybrid workflow is the difference of matrix distribution schemes between different libraries. 
EigenKernel provides data conversion routines between libraries and surmounts this obstacle. 

Figure \ref{FIG-WORKFLOW-EK} shows the possible workflows for a future version of EigenKernel with the a posteriori verification routine.
The SEP solvers and the reducers in Fig. \ref{FIG-WORKFLOW-EK} are briefly explained. 
The SEP solver and the two reducers in ScaLAPACK are the traditional routines.
The SEP solvers of  `ELPA1' and `Eigen\verb|_s|' are also based 
on the traditional algorithm with tridiagonalization. 
The other two solvers `ELPA2' and `Eigen\verb|_sx|' 
and the reducer in ELPA are based on 
non-traditional algorithms for better performance in massive parallelism.
The detailed algorithms for these routines are found in Ref.~\cite{2018TANAKA}.

\begin{figure}[htb]
\begin{center}
  \includegraphics[width=0.5\textwidth]{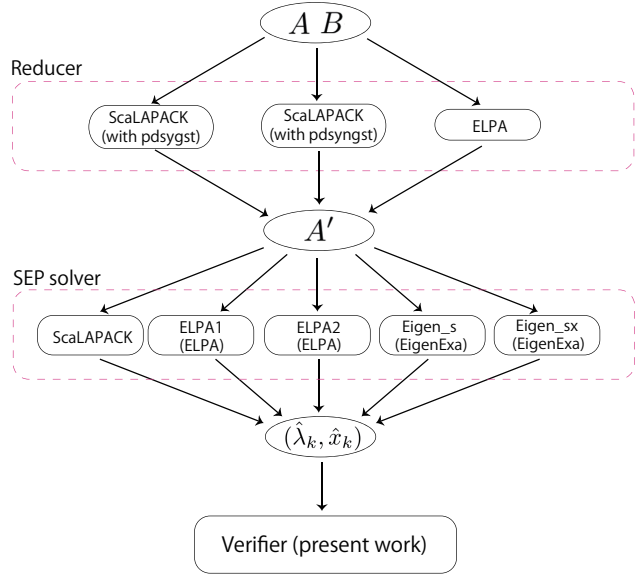}
\end{center}
\caption{Schematic diagram of the possible hybrid workflows for a future version of EigenKernel with the a posteriori verification routine.
Two routines in ScaLAPACK and one routine in ELPA are available for the reducer,
whereas one routine in ScaLAPACK, two routines in ELPA, and two routines in EigenExa are available for the SEP solver. 
The a posteriori verification routine is commonly used among the workflows. 
}
\label{FIG-WORKFLOW-EK}       
\end{figure}

\subsection{Verified numerical computations}
\label{ssec:VNC}
We briefly explain how to obtain mathematically rigorous numerical results using floating-point arithmetic.
Let $\mathbb{F}$ and $\mathbb{IF}$ be sets of floating-point numbers and intervals, respectively.
We use bold-faced letters for interval matrices, the elements of which are intervals.
For an interval matrix $\mathbf{C}$, $C_{\mathrm{inf}}$ and $C_{\mathrm{sup}}$ denote the left and right endpoints, respectively, such that ${\bf C}=[C_{\mathrm{inf}},C_{\mathrm{sup}}]$, i.e., $\mathbf{C}_{ij} = [(C_{\mathrm{inf}})_{ij},(C_{\mathrm{sup}})_{ij}]$ for all $(i,j)$ pairs, which is known as ``inf-sup'' form.
In addition, $C_{\mathrm{mid}}$ and  $C_{\mathrm{rad}}$ denote the midpoint and the radius of ${\bf C}$, respectively, such that $\mathbf{C} = [C_{\mathrm{mid}} - C_{\mathrm{rad}},C_{\mathrm{mid}} + C_{\mathrm{rad}}]$, which is known as ``mid-rad'' form.
Let $\mathit{fl}(\cdot)$, $\mathit{fl}_\bigtriangledown(\cdot) $, and $\mathit{fl}_\bigtriangleup(\cdot) $ be computed results by floating-point arithmetic as defined in IEEE 754 with rounding to the nearest (roundTiesToEven), rounding downwards (roundTowardNegative), and rounding upwards (roundTowardPositive), respectively.
For a given matrix $C = (c_{ij}) \in \Rnn$, the notation $\abs{C}$ indicates $\abs{C} = (\abs{c_{ij}}) \in \Rnn$, and the same applies to vectors, i.e., the absolute value is taken componentwise.

Next, we review basic interval matrix multiplication (cf.~\cite{Ru1999}).
For two point matrices $P, Q \in \mathbb{F}^{n \times n}$, 
the matrix product $PQ \in \mathbb{R}^{n \times n}$ can be enclosed as
\begin{equation}
PQ \in [\mathit{fl}_\bigtriangledown (PQ), \ \mathit{fl}_\bigtriangleup (PQ)],
\label{eq:pp}
\end{equation}
where two matrix multiplications are required.
For a point matrix $P \in \mathbb{F}^{n \times n}$ and an interval matrix ${\bf Q} \in \mathbb{IF}^{n \times n}$, 
the product $P{\bf Q}$ can efficiently be enclosed using mid-rad form of ${\bf Q}$ as
\begin{equation}
P{\bf Q} \subset [\mathit{fl}_\bigtriangledown(PQ_{\mathrm{mid}} - T), \ \mathit{fl}_\bigtriangleup (PQ_{\mathrm{mid}} + T)], \quad T = \mathit{fl}_\bigtriangleup (|P|Q_{\mathrm{rad}}),
\label{eq:pi}
\end{equation}
which involves three matrix multiplications.
Although the inf-sup form can also be used for calculating the enclosure of $P{\bf Q}$, the inf-sup form cannot be written with products of point matrices simply, so that it is much more difficult for the inf-sup form to achieve high-performance in practice, as compared to the mid-rad form~\cite{Ru1999}.
If $\bf Q$ is given by the inf-sup form $[Q_{\mathrm{inf}},Q_{\mathrm{sup}}]$, we can easily transform $\bf Q$ into the mid-rad form, for example, by
\[
  Q_{\mathrm{mid}} = \mathit{fl}_\bigtriangleup((Q_{\mathrm{inf}} + Q_{\mathrm{sup}})/2), \quad Q_{\mathrm{rad}} = \mathit{fl}_\bigtriangleup(Q_{\mathrm{mid}} - Q_{\mathrm{inf}}),
\]
which satisfies $[Q_{\mathrm{inf}},Q_{\mathrm{sup}}] \subset [Q_{\mathrm{mid}}-Q_{\mathrm{rad}},Q_{\mathrm{mid}}+Q_{\mathrm{rad}}]$.

There exist several implementations of the above interval arithmetic for matrix multiplication, e.g., C-XSC~\cite{C-XSC}, a C++ library, and INTLAB~\cite{INTLAB}, a Matlab/Octave toolbox for verified numerical computations.
Both C-XSC and INTLAB share the common feature that they use Basic Linear Algebra Subprograms (BLAS) routines.
In other words, we can efficiently implement interval matrix multiplication using PBLAS, the parallel version of BLAS, on distributed computing environments, as long as directed rounding in floating-point operations is available in BLAS routines for matrix multiplication and the reduction operation of summation.

\section{A posteriori verification methods \label{SEC-VERIF-METHODS}}

\subsection{Possible verification methods \label{SEC-VERIF-METHODS-GENERAL}}

Possible verification methods are discussed here. 
In order to measure the accuracy of the computed solution $(\ap{\lambda}_{k},\ap{x}_k)$, application researchers often compute a norm of the residual vector, such as
\[
\frac{\normtwo{A \ap{x}_k - \ap{\lambda}_k B \ap{x}_k}}{\normtwo{\ap{x}_k}} .
\]
Although this quantity usually suffices to check whether the solver works correctly, it does not verify the accuracy of the computed eigenvalue.
The following inequality is a known residual bound \cite{MiOgRuOi2010}:
\begin{eqnarray}
\min_{1 \le j \le n}\abs{\lambda_{j} - \ap{\lambda}_{k}} \le 
\sqrt{\normtwo{B^{-1}}}\frac{\normtwo{A \ap{x}_k - \ap{\lambda}_k B \ap{x}_k}}{\sqrt{\trans{\ap{x}_k}B\ap{x}_k}},
 \label{EQ-GEP-BOUND}
\end{eqnarray}
which is straightforwardly derived from Wilkinson's bound~\cite{Wi1961} for the standard eigenvalue problem.
From the bound \eqref{EQ-GEP-BOUND}, we can confirm that some eigenvalue of $(A,B)$ exists in the neighborhood of $\ap{\lambda_k}$ satisfying (\ref{EQ-GEP-BOUND}).
However, we cannot determine whether $\ap{\lambda}_{k}$ is an approximation of the $k$-th eigenvalue of $(A,B)$.
In order to understand the electronic state of the problems correctly, it is crucial to determine the order of eigenvalues \cite{LeHoSoMiZh2018}.

To our knowledge, we have the following two approaches to determine the order of eigenvalues of symmetric matrices:
\begin{itemize}
    \item[(a)] Compute all eigenpairs (pairs of eigenvalues and eigenvectors) and verify the error bounds of all computed eigenvalues (cf.,\ e.g.\, \cite{MiOgRuOi2010,Mi2012}).
    \item[(b)] Compute an approximation $\ap{\lambda}_{k}$ of a target eigenvalue using Sylvester's law of inertia with $\mathrm{LD\trans{L}}$ decomposition \cite{LeHoSoMiZh2018}, and verify that $\ap{\lambda}_{k}$ is an approximation of the $k$-th eigenvalue with an error bound (cf.,\ e.g.\, \cite{Ya2001}).
\end{itemize}
The advantages and disadvantages of each approach from a practical point of view are as follows:
\begin{itemize}
\item Approach (a) is simpler, numerically more stable, and easier to implement than Approach (b).
\item Approach (a) can straightforwardly use highly optimized routines for matrix multiplication and eigenvalue decomposition.
\item Approach (a) cannot exploit the sparsity of $A$ and $B$, whereas Approach (b) can to a certain extent.
\end{itemize}
In the present paper, we adopt Approach (a) from the aspect of simplicity and efficiency of code development on supercomputers.

\subsection{Proposed method}
We attempt to obtain componentwise error bounds for computed eigenvalues $\ap{\lambda}_{k}$, $k = 1, 2, \dots, n$.
Let $X, D \in \Rnn$ denote a matrix comprising all generalized eigenvectors of $(A,B)$ and a diagonal matrix of the corresponding generalized eigenvalues such that
\[
X = [x_{1},x_{2},\dots,x_{n}], \quad D = \diag(\lambda_{1},\lambda_{2},\dots,\lambda_{n}) .
\]
Let $I$ denote the $n \times n$ identity matrix.
Then, we have
\[
 \left\{\begin{array}{l}
 AX = BXD, \\
 \trans{X}BX = I .
 \end{array}\right.
\]
Let $\ap{X} = [\ap{x}_{1},\ap{x}_{2},\dots,\ap{x}_{n}] \in \Rnn$ and $\ap{D} = \diag(\ap{\lambda}_{1},\ap{\lambda}_{2},\dots,\ap{\lambda}_{n}) \in \Rnn$ be approximations of $X$ and $D$, respectively.
Suppose $\ap{X}$ is nonsingular.
Then,
\[
  A\ap{X} \approx B\ap{X}\ap{D}, \quad \trans{\ap{X}}B\ap{X} \approx I \quad \Rightarrow \quad \ap{X}^{-1}B^{-1}A\ap{X} \approx \ap{D}, \quad (B\ap{X})^{-1} \approx \trans{\ap{X}} .
\]
Since $\ap{X}^{-1}B^{-1}A\ap{X}$ is a similarity transformation of $B^{-1}A$, the eigenvalues of $\ap{X}^{-1}B^{-1}A\ap{X}$ are the same as those of $B^{-1}A$, and thus the generalized eigenvalues of $(A,B)$.

Here, we attempt to compute an inclusion of $\ap{X}^{-1}B^{-1}A\ap{X}$.
To this end, we introduce Yamamoto's theorem for verified solutions of linear systems.
For given matrices $P = (p_{ij}), Q = (q_{ij}) \in \Rnn$, the notation $P \le Q$ indicates $p_{ij} \le q_{ij}$ for all $(i,j)$, and the same applies to vectors, i.e., the inequality holds componentwise.
Moreover, define $e \equiv \trans{(1,1,\dots,1)} \in \Rn$.

\begin{theorem}[Yamamoto \cite{Ya1984}]
\label{th:yamamoto}
Let $A$ and $C$ be real $n \times n$ matrices, and let $b$ and $\ap{x}$ be real $n$-vectors.
If $\norminf{I - CA} < 1$, then $A$ is nonsingular, and
\[
  \abs{A^{-1}b - \ap{x}} \le \abs{C(b - A\ap{x})} + \frac{\norminf{C(b - A\ap{x})}}{1 - \norminf{I - CA}}\abs{I - CA}e .
\]
\end{theorem}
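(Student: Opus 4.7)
The plan is to proceed in three steps, following a standard Neumann-series style argument. First, I would establish the nonsingularity of $A$ from the hypothesis $\norminf{I - CA} < 1$. Writing $CA = I - (I - CA)$, the series $\sum_{k=0}^{\infty}(I - CA)^{k}$ converges in the infinity-operator norm and furnishes an explicit inverse of $CA$. Hence $CA$ is nonsingular, which forces both $C$ and $A$ (being square) to be nonsingular and supplies the identity $A^{-1} = (CA)^{-1} C$.

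Second, I would set $r = b - A\ap{x}$, $s = Cr$, and $M = I - CA$, so that $\norminf{M} < 1$. The error vector is $A^{-1} b - \ap{x} = A^{-1} r$, and using the identity above I would rewrite it as
\[
A^{-1} r = (CA)^{-1} s = (I - M)^{-1} s = s + M (I - M)^{-1} s,
\]
which is just the one-step Neumann splitting $(I - M)^{-1} = I + M(I - M)^{-1}$ applied to $s$. Taking componentwise absolute values and using the elementary componentwise inequality $\abs{Nu} \le \abs{N}\abs{u}$ for any matrix $N$ and vector $u$ would yield
\[
\abs{A^{-1}b - \ap{x}} \le \abs{s} + \abs{M}\,\abs{(I - M)^{-1} s}.
\]

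Third, I would convert the still-unknown vector $(I - M)^{-1} s$ into a scalar bound times $e$. The standard estimate $\norminf{(I - M)^{-1}} \le 1/(1 - \norminf{M})$ gives
\[
\norminf{(I - M)^{-1} s} \le \frac{\norminf{s}}{1 - \norminf{M}},
\]
and since $\abs{v} \le \norminf{v}\,e$ componentwise for any vector $v$, substituting this back and restoring $s = C(b - A\ap{x})$ and $M = I - CA$ reproduces the claimed inequality exactly.

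The main obstacle is keeping the componentwise and norm-wise estimates cleanly separated: the left-hand side of the conclusion demands a componentwise bound, whereas control of $(I - M)^{-1}$ is naturally scalar. The key device is the splitting $(I - M)^{-1} s = s + M(I - M)^{-1} s$, which isolates the first-order ``sharp'' term $\abs{s}$ componentwise and relegates the scalar factor $\norminf{s}/(1 - \norminf{M})$ to multiply the matrix $\abs{M}\,e$ as a uniform tail, preserving tightness precisely where the residual dominates.
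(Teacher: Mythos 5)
Your proof is correct and follows essentially the same route as the paper: the paper cites Yamamoto for Theorem~\ref{th:yamamoto} itself, but its proof of the variant Theorem~\ref{th:proposed} uses exactly this Neumann-series decomposition of $(CA)^{-1}$, peeling off the leading term componentwise and bounding the tail by $\norminf{C(b-A\ap{x})}/(1-\norminf{I-CA})$ times $\abs{I-CA}e$. Your one-step splitting $(I-M)^{-1}=I+M(I-M)^{-1}$ is just a compact form of the paper's explicit series $I+G+G^{2}+\cdots$, so the arguments are substantively identical.
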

\noindent
In practice, we adopt an approximate inverse of $A$ as $C$ in Theorem~\ref{th:yamamoto}.

In order to apply Yamamoto's theorem to componentwise error bounds for computed eigenvalues with the Gershgorin circle theorem, we present a variant of Yamamoto's theorem.
\begin{theorem}
\label{th:proposed}
Let $A$, $B$, $C$, and $\ap{X}$ be real $n \times n$ matrices.
If $\norminf{I - CA} < 1$, then $A$ is nonsingular, and
\[
  \abs{A^{-1}B - \ap{X}}e \le \abs{C(B - A\ap{X})}e + \frac{\norminf{C(B - A\ap{X})}}{1 - \norminf{I - CA}}\abs{I - CA}e .
\]
\end{theorem}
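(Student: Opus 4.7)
The plan is to mirror the proof of Yamamoto's theorem with scalars and vectors promoted to matrices, using the identity $\norminf{M} = \norminf{\abs{M}e}$ to shuttle between the vector inequality one derives and the scalar infinity-norm bound needed to close it. First, I will verify that $\norminf{I-CA} < 1$ forces $A$ to be nonsingular: the Neumann series $\sum_{k\ge 0}(I-CA)^{k}$ converges and inverts $CA$, so both $C$ and $A$ must be invertible. This justifies setting $Y := A^{-1}B - \ap{X}$ and $R := B - A\ap{X}$, the quantities around which the whole argument revolves.

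Next, I will establish a fixed-point identity for $Y$. Since $AY = R$, multiplying by $C$ gives $CAY = CR$, and writing $I = CA + (I-CA)$ yields
\[
  Y = CR + (I-CA)Y.
\]
Applying componentwise absolute values and the elementary entrywise bound $\abs{PQ}\le\abs{P}\abs{Q}$ gives $\abs{Y} \le \abs{CR} + \abs{I-CA}\abs{Y}$, and right-multiplying by $e$ produces the vector inequality
\[
  \abs{Y}e \le \abs{CR}e + \abs{I-CA}\abs{Y}e.
\]
This is the matrix analogue of the vector inequality that drives Yamamoto's proof.

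To close it, I will take infinity norms on both sides and invoke $\norminf{Mv}\le\norminf{M}\norminf{v}$, combined with $\norminf{\abs{Y}e} = \norminf{Y}$ and $\norminf{\abs{CR}e} = \norminf{CR}$. This yields the scalar contraction bound $\norminf{Y} \le \norminf{CR}/(1-\norminf{I-CA})$. Feeding the componentwise inequality $(\abs{Y}e)_j \le \norminf{Y}$ into $\abs{I-CA}\abs{Y}e$ gives $\abs{I-CA}\abs{Y}e \le \norminf{Y}\abs{I-CA}e$, and substituting back into the vector inequality produces the stated bound after restoring $Y=A^{-1}B-\ap{X}$ and $CR=C(B-A\ap{X})$.

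The computation itself is routine; the real pitfall is the tempting alternative of applying Theorem~\ref{th:yamamoto} column by column to $A^{-1}b_j - \ap{x}_j$ and summing. That route produces the coefficient $\sum_{j}\norminf{C(b_j - A\ap{x}_j)}$, which can exceed $\norminf{C(B-A\ap{X})}$ by a factor of up to $n$ in general and so loses the sharpness advertised in the statement. Performing the fixed-point argument directly at the matrix level, and contracting with $e$ only \emph{after} the fixed-point identity is in place, is the key modelling choice that preserves the sharper coefficient $\norminf{C(B-A\ap{X})}$ actually appearing in the bound.
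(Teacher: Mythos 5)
Your proof is correct, and it reaches the stated bound by a route that is organized differently from the paper's. The paper starts from the identity $A^{-1}B - \ap{X} = (CA)^{-1}C(B - A\ap{X})$, expands $(CA)^{-1} = (I-G)^{-1}$ (with $G \equiv I - CA$) as the Neumann series $I + G + G^{2} + \cdots$, bounds it componentwise by $I + \abs{G} + \abs{G}^{2} + \cdots$, splits off the leading term, and then uses $\abs{CR}e \le \norminf{CR}e$ and the geometric sum $\sum_k \norminf{G}^k = (1-\norminf{G})^{-1}$ to finish in one chain of inequalities, mirroring Yamamoto's original derivation. You instead keep everything finite: from the fixed-point identity $Y = CR + GY$ you obtain $\abs{Y}e \le \abs{CR}e + \abs{G}\abs{Y}e$, close it in the infinity norm (via $\norminf{\abs{Y}e} = \norminf{Y}$) to get $\norminf{Y} \le \norminf{CR}/(1 - \norminf{G})$, and then bootstrap this scalar bound back into the componentwise inequality using $\abs{Y}e \le \norminf{Y}e$. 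The two arguments are equivalent in substance---iterating your identity regenerates the paper's series---but yours confines the Neumann series to the nonsingularity claim and makes the contraction-plus-reinsertion structure explicit, which is arguably cleaner to verify, while the paper's version is a single monotone chain of componentwise estimates that transparently parallels the proof of Theorem~\ref{th:yamamoto}. Your closing observation is also apt and explains why the paper states this variant at all: applying Theorem~\ref{th:yamamoto} column by column and summing would replace $\norminf{C(B - A\ap{X})}$ by the sum of the columnwise maxima of $\abs{C(B-A\ap{X})}$, which can be larger by a factor of up to $n$, so deriving the bound at the matrix level before contracting with $e$ is exactly what preserves the sharper coefficient.
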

\begin{proof}
In a similar manner to the derivation of Yamamoto's theorem, and noting that for $P \in \Rnn$, $\abs{P}e \le \norminf{P}e$, we have
\begin{eqnarray*}
\abs{A^{-1}B - \ap{X}}e &=& \abs{(CA)^{-1}C(B - A\ap{X})}e \\
 &\le& \abs{(CA)^{-1}}\cdot\abs{CR}e, \quad R \equiv B - A\ap{X} \\
 &=& \abs{(I - (I - CA))^{-1}}\cdot\abs{CR}e
 = \abs{I + G + G^{2} + \cdots}\cdot\abs{CR}e, \quad G \equiv I - CA \\
 &\le& \abs{CR}e + \abs{G}(I + \abs{G} + \abs{G}^{2} + \cdots)\abs{CR}e \\
 &\le& \abs{CR}e + \norminf{CR}\abs{G}(I + \abs{G} + \abs{G}^{2} + \cdots)e \\
&\le& \abs{CR}e + \frac{\norminf{CR}}{1 - \norminf{G}}\abs{G}e,
\end{eqnarray*}
which proves the theorem.
\end{proof}

We now consider a linear system $(B\ap{X})Y = A\ap{X}$ for $Y$.
Then, we can regard $\ap{D}$ as its approximate solution and $\trans{\ap{X}}$ as an approximate inverse of $B\ap{X}$.
Let $Y$, $R$, and $G$ be defined as
\begin{equation}
    \label{eq:RGdef}
 \left\{\begin{array}{l}
 R \equiv \trans{\ap{X}}(A\ap{X} - B\ap{X}\ap{D}), \\
 G \equiv \trans{\ap{X}}B\ap{X} - I .
 \end{array}\right.
\end{equation}
If $\norminf{G} < 1$, applying Theorem~\ref{th:proposed} to the linear system $(B\ap{X})Y = A\ap{X}$ yields
\begin{equation}
\label{eq:estimate}
  \abs{\ap{X}^{-1}(B^{-1}A)\ap{X} - \ap{D}}e = \abs{(B\ap{X})^{-1}(A\ap{X}) - \ap{D}}e \le \abs{R}e + \frac{\norminf{R}}{1 - \norminf{G}}\abs{G}e \equiv r .
\end{equation}
Recall that $\lambda_{i}$, $i = 1, 2, \dots, n$, are the eigenvalues of $B^{-1}A$.
For $\Lambda \equiv \{\lambda_{1},\lambda_{2},\dots,\lambda_{n}\}$, the Gershgorin circle theorem implies
\begin{equation}
\label{eq:veig}
  \Lambda \subseteq \bigcup_{k = 1}^{n}[\ap{\lambda}_{i} - r_{i}, \ap{\lambda}_{i} + r_{i}] .
\end{equation}
If all the disks $[\ap{\lambda}_{i} - r_{i}, \ap{\lambda}_{i} + r_{i}]$ are isolated, then all of the eigenvalues are separated, i.e., each disk contains precisely one eigenvalue of $B^{-1}A$~\cite[pp.~71ff]{Wi1965},
as shown schematically in Fig.~\ref{FIG-VERIFIED-SOLUTION}.
If several disks are overlapped such that $|\ap{\lambda}_{k+1} - \ap{\lambda}_{k}| > r_k + r_{k+1}$ for some $k$, 
then some of the eigenvalues are degenerate or nearly degenerate.
Moreover, if $B$ is ill-conditioned, then the $B$-orthogonality of $\ap{X}$ may break down such that $\norminf{G} \ge 1$.
In such a case, Theorem~\ref{th:proposed} cannot be applied, and the verification procedure must end in failure.
Therefore, we need to check whether $\norminf{G} < 1$ in code development from the verification method.

In \cite{Mi2012}, a similar method has been proposed, which is essentially the same as the proposed method.
The main difference between the method in \cite{Mi2012} and the proposed method is that the former focuses on the non-symmetric case and is more general.
On the other hand, the proposed method is specialized for the symmetric case, i.e., we can avoid complex arithmetic including the verification procedure and compute an approximate inverse of $B\ap{X}$ by utilizing $\trans{\ap{X}} \approx (B\ap{X})^{-1}$.

\begin{figure}[t]
\begin{center}
  \includegraphics[width=0.35\textwidth]{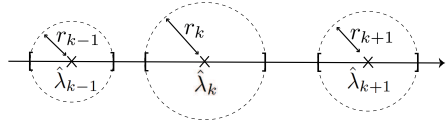}
\end{center}
\caption{Schematic diagram of verified solution when all of the disks are separated, and each disk contains precisely one eigenvalue ($\lambda_k \in [\ap{\lambda}_k- r_k, \ap{\lambda}_k + r_k ]$).
}
\label{FIG-VERIFIED-SOLUTION}       
\end{figure}

\subsection{Code development \label{SEC-VERIF-CODE}}

We explain how to obtain an upper bound of the vector $r$ in \eqref{eq:estimate} using only floating-point arithmetic.
We first attempt to obtain upper bounds $G'$ and $R'$ of $| R | = | \hat{X}^{\top}(A\hat{X}- B\hat{X} \hat{D})|$
and $| G | = | \hat{X}^{\top} B \hat{X}-I |$ in (\ref{eq:RGdef}) such that $\abs{G} \le G'$ and $\abs{R} \le R'$ as follows:
\begin{enumerate}
\item ${\bf C} \leftarrow B \hat{X}$ \% Two matrix multiplications based on (\ref{eq:pp})
\item ${\bf F} \leftarrow \hat{X}^{\top}{\bf C}$ \% Three matrix multiplications based on (\ref{eq:pi})
\item ${\bf W} \leftarrow {\bf F} - I$ \% Negligible cost, $W_{\mathrm{inf}}\equiv\mathit{fl}_\bigtriangledown (F_{\mathrm{inf}}-I), \ W_{\mathrm{sup}}\equiv\mathit{fl}_\bigtriangleup  (F_{\mathrm{sup}}-I)$
\item $| G | \le \max(|W_{\mathrm{inf}}|,|W_{\mathrm{sup}}|) \equiv G'$
\item ${\bf F} \leftarrow A \hat{X}$ \% Two matrix multiplications based on (\ref{eq:pp})
\item ${\bf C} \leftarrow {\bf C} \hat{D} $ \% Negligible cost because $\hat D$ is a diagonal matrix
\item ${\bf C} \leftarrow {\bf F} - {\bf C} $ \% Negligible cost, \ $C_{\mathrm{inf}}$ is overwritten by $\mathit{fl}_\bigtriangledown (F_{\mathrm{inf}}-C_{\mathrm{sup}}), \ C_{\mathrm{sup}}$ is overwritten by $\mathit{fl}_\bigtriangleup  (F_{\mathrm{sup}}-C_{\mathrm{inf}})$
\item ${\bf C} \leftarrow \hat{X}^{\top} {\bf C} $ \% Three matrix multiplications based on (\ref{eq:pi})
\item $| R | \le \max(|C_{\mathrm{inf}}|,|C_{\mathrm{sup}}|) \equiv R'$
\end{enumerate}
Note that the notation `$\leftarrow$' indicates enclosure of the result.
Moreover, for given matrices $P =(p_{ij}), Q = (q_{ij}) \in \mathbb{F}^{n \times n}$, the notation $\max(P,Q)$ indicates $\max(p_{ij},q_{ij})$ for all $(i,j)$ pairs, i.e., the maximum is taken componentwise.
Here, five matrix multiplications are required for calculating $G'$ until Step 4, and an additional five matrix multiplications for the remaining calculations.
Thus, in total, 10 matrix multiplications are required for calculating $G'$ and $R'$.
Therefore, calculating $G'$ and $R'$ involves $20n^3 + \mathcal{O}(n^2)$ floating-point operations if the symmetry of $G$ is not taken into account.
We compute the upper bounds of $\| R \|_\infty$ and $\| G \|_\infty$ as
\[
\| R \|_\infty \le \| R' \|_\infty \le \mathit{fl}_\bigtriangleup (\| R' \|_\infty) \equiv  \alpha_1, \quad 
\| G \|_\infty \le \| G' \|_\infty \le \mathit{fl}_\bigtriangleup (\| G' \|_\infty) \equiv  \alpha_2.
\]
If $\alpha_2 \ge 1$, then the verification failed.
Hence, we check $\alpha_2 < 1$ or $\alpha_2 \ge 1$ after Step 4.
If $\alpha_2 \ge 1$, then the computation prematurely finishes without proceeding to Step 5.
Otherwise, we proceed until Step 9 and obtain upper bound $r'$ of $r$ in \eqref{eq:estimate} by
\begin{equation}
r \le \mathit{fl}_\bigtriangleup \left( R'e + \frac{\alpha_1}{\mathit{fl}_\bigtriangledown (1-\alpha_2)}G'e \right) \equiv r'.
\label{eq:radius}
\end{equation}

The routine \textsf{pdsygvx} in ScaLAPACK produces computed eigenvalues $\ap{\lambda}_{i}$ with $\ap{\lambda}_1 \le \ap{\lambda}_2 \le \dots \le \ap{\lambda}_n$.
Therefore, if $\ap{\lambda}_{i+1} - \ap{\lambda}_{i} > r'_{i} + r'_{i+1}$ are satisfied for all $i = 1, 2, \dots, n-1$, 
then we can separate all of the eigenvalues and determine the order of the eigenvalues correctly.

The test code was developed in the C language
with the parallel libraries PBLAS and ScaLAPACK.
The solver procedure uses
a GEP solver routine (\textsf{pdsygvx}) in ScaLAPACK, whereas
the verifier routine uses 
the matrix multiplication routine (\textsf{pdgemm}) in PBLAS.

Note that 
the verifier procedure is based primarily on
matrix multiplication,
whereas the solver procedure consists
of complicated procedures, such as 
Cholesky decomposition, and tridiagonalization. 
Therefore, 
the verifier procedure is expected to be
moderate in terms of computational time 
and to be efficient in terms of parallelism,
as compared to the solver procedure.

\section{Numerical example \label{SEC-NUM-EXAMPLE} }
\subsection{Problem \label{SEC-NUM-EXAMPLE-PROBLEM} }

Numerical examples are presented in this section. 
All matrix eigenvalue problems stem from the electronic-state calculation software ELSES \cite{ELSES-URL, HOSHI2012-ELSES, HOSHI2016-SC16},
and the matrix data files appear in the ELSES matrix library
\cite{ELSES-MATRIX-LIBRARY-URL, HOSHI2018-PENTA}.
Details are explained in \ref{SEC-GHEV}.
The problems calculated in this section are 
PPE354, PPE3594, PPE7194, PPE17994, PPE107994, VCNT22500, VCNT225000, and NCCS430080
in the ELSES matrix library. 
The matrices are those of systems having disordered atomic structures.  
Disordered systems are important for industrial applications because most industrial materials are disordered, unlike ideal crystal or periodic structures.  
Consequently, eigenvalues are not degenerate in all of the problems. 
The number in the problem name indicates the matrix dimension $n$.
For example, the system PPE354 contains $n \times n$ matrices $A$ and $B$ 
with $n = 354$. 
All of the matrices $A$ and $B$ in these systems are real symmetric. 
The systems with the letters `PPE' are systems
of organic polymers of poly-(phenylene-ethynylene) (PPE).
The left-hand panel of Fig.~\ref{FIG-EigenValue-Graph}(a) shows the structural formula of PPE, and the right-hand panel of Fig.~\ref{FIG-EigenValue-Graph}(b) shows a part of the polymer in a disordered structure. 
The difference of the matrix size stems from the length of the polymer chain.
The system of PPE354 is, for example, 
a polymer with $N_{m}=10$ monomers and $N_{\rm atom}= 12N_{m} = 120$ atoms. 
The system VCNT225000 is the system of vibrating carbon nanotube (VCNT). 
The system NCCS430080 is the system of nano-composite carbon solid (NCCS)
\cite{HOSHI2013-JPSJ} and will be explained in the last paragraph of this section. 

The characteristic of the eigenvalue distribution 
can be captured by the following two quantities.
One is the difference of sequential approximate eigenvalues 
$\ap{\delta}_k \equiv \ap{\lambda}_{k+1} - \ap{\lambda}_k$, $k=1,2,...,n-1$, and the other is 
the eigenvalue count $I(\lambda)$, which is defined on the eigenvalue axis $\lambda$ as 
\begin{eqnarray}
I(\lambda) \equiv \sum_{k = 1}^{n} \theta(\lambda_k - \lambda )
\end{eqnarray}
with the step function 
\begin{eqnarray}
\theta(\lambda) \equiv 
\begin{cases}
1 & ( \lambda \ge 0) \\
0 & ( \lambda < 0).
\end{cases}
\end{eqnarray}
In other words,  
the eigenvalue count $I(\lambda)$ is 
the number of the eigenvalues that are smaller than $\lambda$.

Here, we demonstrate the similarity and the size dependence of the eigenvalue distribution 
among the organic polymer systems.
The organic polymers of PPE354, PPE17994, and PPE107994 are selected.  
Figures \ref{FIG-EigenValue-Graph}(b) and  \ref{FIG-EigenValue-Graph}(c) show 
the normalized eigenvalue distribution $I(\lambda)/n$ among these three systems.
The three polymers exhibit quite similar curves in Figs.~\ref{FIG-EigenValue-Graph}(b) and ~\ref{FIG-EigenValue-Graph}(c), and, therefore, 
the difference $\ap{\delta}_k$ 
is nearly proportional to $1/n$ $(\ap{\delta}_k \propto 1/n) $,
as explained in Section~\ref{SEC-INTRO}.

\begin{figure}[thb]
\begin{center}
  \includegraphics[width=0.7\textwidth]{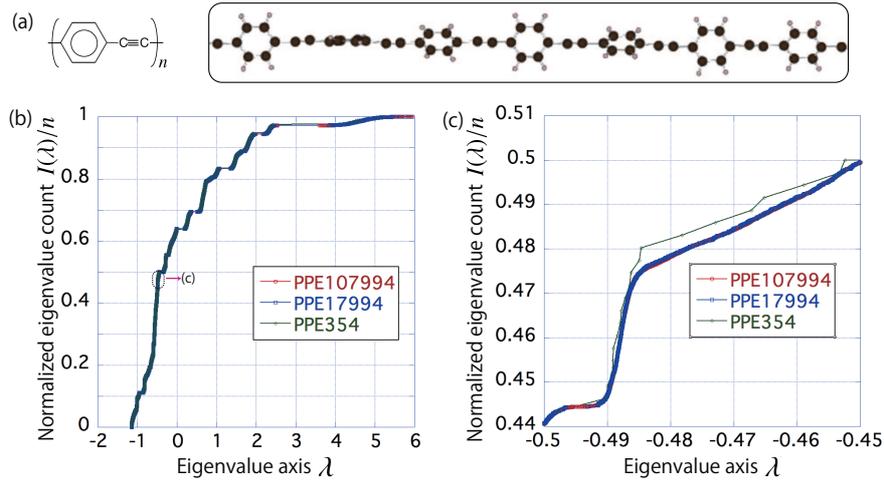}
\end{center}
\caption{(a) Structural formula (left) and a part of the atomic structure (right) of poly-(phenylene-ethynylene) (PPE). 
(b) Similarity of eigenvalue distribution in PPE354 (circle), PPE17994 (square), and PPE107994 (diamond). 
The normalized eigenvalue counts $I(\lambda)/n$ are plotted on the eigenvalue axis $\lambda$.
(c) Close-up of the local area indicated by the dotted line in (b). 
}
\label{FIG-EigenValue-Graph}       
\end{figure}

\subsection{Numerical results \label{SEC-NUM-EXAMPLE-RESULT} }

Tables \ref{TABLE-NUM-EXAMPLE-RESULT} and \ref{TABLE-NUM-EXAMPLE-TIME} show the calculation results
on the K computer.
First, we focus on the numerical results for the approximate eigenvalues $\ap{\lambda}$ and its upper bound $r'$.
The routine \textsf{pdsygvx} in ScaLAPACK produces $\ap{\lambda}_i$, $i = 1, 2, \dots, n$ with $\ap{\lambda}_1 \le \ap{\lambda}_2 \le \dots \le \ap{\lambda}_n$.
The vector $r'$ is obtained by (\ref{eq:radius}).
Here, we define 
the radius sum 
$\rho_k \equiv r'_{k+1} + r'_{k}$ for $k = 1, 2, \dots, n-1$.
We find $m$ such that $\displaystyle \ap{\delta}_m - \rho_m = \min_{1 \le k \le n - 1}(\ap{\delta}_k - \rho_k)$.
The items ``Difference'' and ``Radius'' in Table~\ref{TABLE-NUM-EXAMPLE-RESULT} show
$\ap{\delta}_m$ and $\rho_m$, respectively.
As shown in the table, $\ap{\delta}_m > \rho_m$ is satisfied in all of the problems,
or all of the disks of $|\lambda_k - \ap{\lambda}_k| < r'_k$ are separated 
as in Fig.~\ref{FIG-VERIFIED-SOLUTION}.
Thus, we can determine the order of eigenvalues in each problem.
If $\ap{\delta}_k < \rho_k$ is satisfied for some $k$, 
then the two disks of 
$|\lambda_k - \ap{\lambda}_k| < r'_k$ and 
$|\lambda_{k+1} - \ap{\lambda}_{k+1}| < r'_{k+1}$
are overlapped and 
the two exact eigenvalues of $\lambda_k$ and $\lambda_{k+1}$ may degenerate.

Figure~\ref{FIG-EigenValue-Graph2}(a) 
shows 
the eigenvalue difference $\{ \ap{\delta}_k \}$ and the radius sum $\{ \rho_k \}$ as a function of the eigenvalue $\{ \ap{\lambda}_k \}$
in the case of PPE107994. 
The radius sum satisfies $\rho_k \le 10^{-10}$
and is smaller than the difference ($\rho_k < \ap{\delta}_k$). 
We found the minimality $m=49,201$ and 
$\ap{\lambda}_{49201} \approx -0.488$,
$\ap{\delta}_{49201} \approx 6.42 \times 10^{-11}$, and
$\rho_{49201} \approx 9.17 \times 10^{-12}$.
Figure~\ref{FIG-EigenValue-Graph2}(b)
shows a close-up of 
Fig.~\ref{FIG-EigenValue-Graph2}(a) 
and contains the eigenvalue 
$\ap{\lambda}_{49201} \approx -0.488$.
It is reasonable that
the eigenvalue $\ap{\lambda}_{49201}$ appears
in the region of $-0.490 < \lambda < -0.485 $, 
because many eigenvalues are densely clustered, and 
the eigenvalue count $I(\lambda)$ increases rapidly in the region, as shown in 
Fig.~\ref{FIG-EigenValue-Graph}(c).
The same analysis was also carried out 
in the case of NCCS430080, which is the largest problem among the present calculations,  
and the results are shown in Figs.~\ref{FIG-EigenValue-Graph2}(c) and \ref{FIG-EigenValue-Graph2}(d). 
The radius sum is smaller than the difference ($\rho_k < \ap{\delta}_k$).

\begin{table}[htb]
  \begin{center}
  \caption{Numerical example  \label{TABLE-NUM-EXAMPLE-RESULT}}
  \begin{tabular}{|l|r|l|l|} \hline
    Problem name & Matrix dimension ($n$) & Difference ($\ap{\delta}_m$) & Radius sum ($\rho_m$)  \\ \hline \hline
    PPE354 & 354 & $6.61 \times 10^{-5}$ & $4.90 \times 10^{-13}$  \\
    PPE3594 & 3,594 & $1.03 \times 10^{-7}$ & $1.33 \times 10^{-12}$  \\
    PPE7194 & 7,194 & $5.55 \times 10^{-8}$ & $1.18 \times 10^{-12}$  \\
    PPE17994 & 17,994 &$5.32 \times 10^{-11}$ & $2.56 \times 10^{-12}$  \\
    PPE107994 & 107,994 & $6.42 \times 10^{-11}$ & $9.17 \times 10^{-12}$  \\
    VCNT22500 & 22,500 & $2.59 \times 10^{-7}$ & $3.20 \times 10^{-10}$  \\
    VCNT225000 & 225,000 & $1.97 \times 10^{-9}$ & $1.64 \times 10^{-9}$  \\
    NCCS430080 & 430,080 & $5.10 \times 10^{-9}$ & $1.61 \times 10^{-9}$  \\
    \hline
  \end{tabular}
  \end{center}
\end{table}

\begin{figure}[tb]
\begin{center}
  \includegraphics[width=0.7\textwidth]{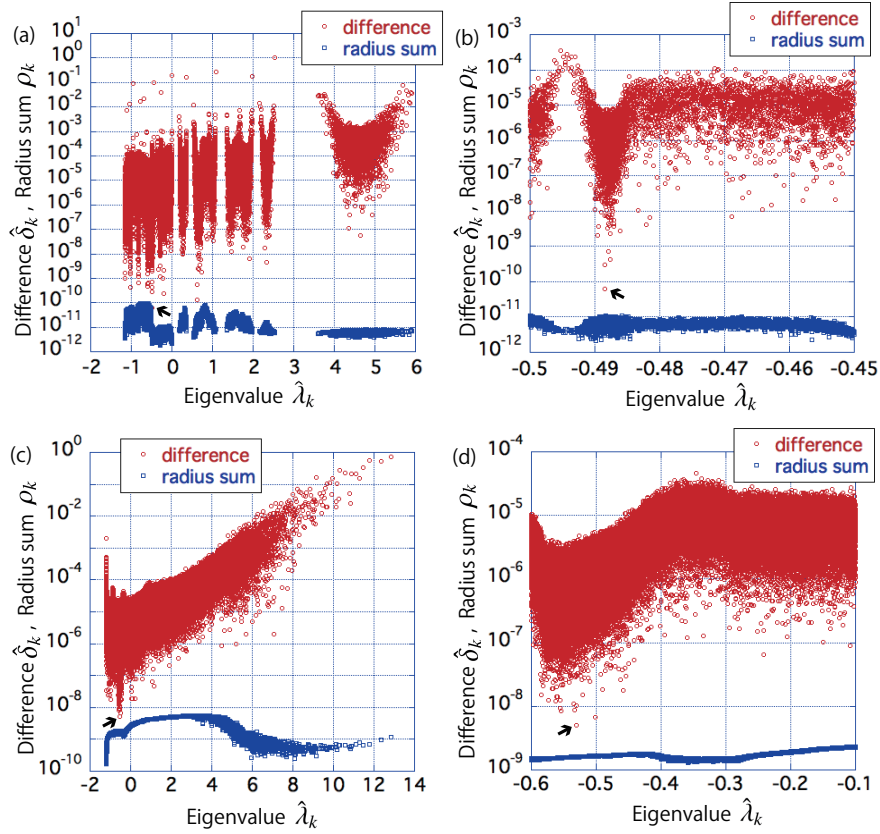}
\end{center}
\caption{(a) Plot of the eigenvalue difference $\{ \ap{\delta}_k \}$ and the radius sum $\{ \rho_k \}$ as a function of the eigenvalue $\{ \ap{\lambda}_k \}$ in the case of PPE107994. The arrow indicates $\ap{\delta}_m$.
(b) Close-up of (a). The arrow indicates $\ap{\delta}_m$.
(c) Plot of the eigenvalue difference $\{ \ap{\delta}_k \}$ and the radius sum $\{ \rho_k \}$ as a function of the eigenvalue $\{ \ap{\lambda}_k \}$ in the case of NCCS430080. The arrow indicates $\ap{\delta}_m$.
(d) Close-up of (c). The arrow indicates $\ap{\delta}_m$.
}
\label{FIG-EigenValue-Graph2}       
\end{figure}
Table~\ref{TABLE-NUM-EXAMPLE-TIME} shows the computational times.
The item $T_{\rm sol}$ in Table~\ref{TABLE-NUM-EXAMPLE-TIME} shows the computing time for \textsf{pdsygvx} in ScaLAPACK.
The item $T_{\rm veri}$ shows the computing time for the verification process, mainly, the time for matrix multiplications.
Here, the verifier consumes a moderate cost ($T_{\rm veri} \le T_{\rm sol}$),
as expected in Section~\ref{SEC-VERIF-CODE}.
More intensive benchmarks, including weak scaling, will be carried out in the future. 

\begin{table}[bht]
  \begin{center}
  \caption{Elapsed times among the problems. The number of used processor nodes $P$ and the elapsed times for solver $T_{\rm sol}$ and
   verifier $T_{\rm veri}$ are shown.  \label{TABLE-NUM-EXAMPLE-TIME}}
  \begin{tabular}{|l|r|r|r|} \hline
    Problem name & $P$ & $T_{\rm sol}$ & $T_{\rm veri}$ \\ \hline \hline
    PPE354 & 4 & 0.32 & 0.12  \\
    PPE3594 & 4 & 20.74 & 4.73  \\
    PPE7194 & 4 & 118.84 & 31.74  \\
    PPE17994 & 16 & 217.91 & 105.75  \\
    PPE107994 & 600 & 1009.85 & 682.92  \\
    VCNT22500 & 64 & 105.75 & 59.06  \\
    VCNT225000 & 2025 & 2625.76 & 1775.09  \\
    NCCS430080 & 6400 & 8960.03 & 3496.56   \\
    \hline
  \end{tabular}
  \end{center}
\end{table}

In conclusion,
the verification procedure delivers
the intervals that contain the exact eigenvalues
($|\lambda_k - \ap{\lambda}_k| < r'_k$) 
with the approximate eigenvalues $\ap{\lambda}_k$
and the radius $r'_k$. 
We plan to upload 
the radius data files
in ELSES matrix library, 
as well as the input matrix data
and the approximate eigenvalue data.
Then, a graph
similar to Fig.~\ref{FIG-EigenValue-Graph2} can be drawn in order to measure the accuracy of the computed solutions.

Finally,  the present numerical results are discussed  in the context of computational physics.
The matrix problem of NCCS430080, the largest matrix problem in the present paper, 
appears in a previous paper
on a nano-composite carbon solid \cite{HOSHI2013-JPSJ}. 
In general, carbon can form diamond and graphite crystals. 
The material is composed of graphite-like and diamond-like domains. 
Figure~\ref{FIG-NPD-WFN} shows an example of the electronic wavefunction (the highest occupied electronic wavefunction or the wavefunction of the electron that has the highest energy). 
The atomic structure of Fig.~\ref{FIG-NPD-WFN} is that of Fig. 2(a) of Ref.~\cite{HOSHI2013-JPSJ}. (See Ref.~\cite{HOSHI2013-JPSJ} for details.) 
In the present context,
Fig.~\ref{FIG-NPD-WFN} indicates that the wavefunction
is an intermediate wavefunction, as explained in Section~\ref{SEC-CLUSTERED}, and lies in the boundary region between graphite-like and diamond-like domains.
The a posteriori verification procedure confirms that 
all of the eigenvalues are distinguished numerically, 
and the above physical discussion regarding each wavefunction is meaningful.

\begin{figure}[ht]
\begin{center}
  \includegraphics[width=0.5\textwidth]{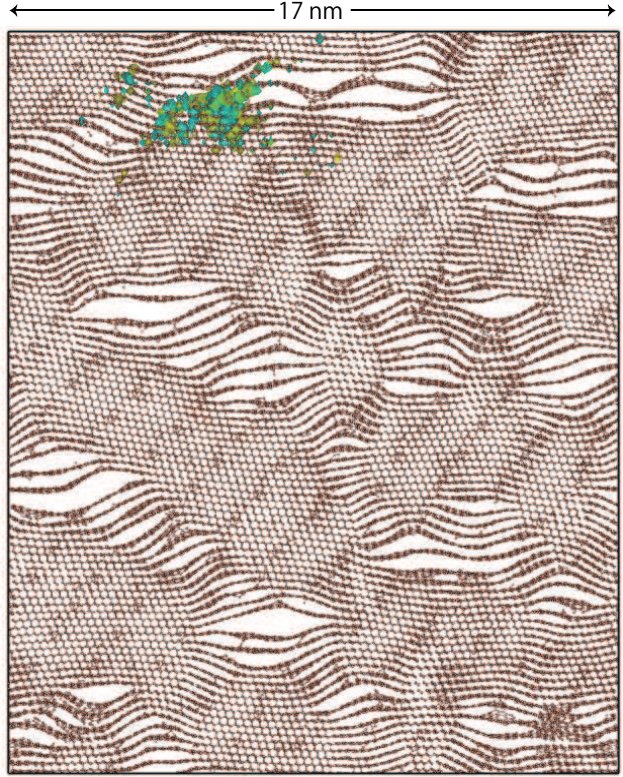}
\end{center}
\caption{Example of an electronic wavefunction of a nano-composite carbon solid \cite{HOSHI2013-JPSJ}. 
The wavefunction $\phi(\bm{r})$ is drawn by the two isosurfaces painted green and yellow. The isovalues are $\phi(\bm{r}) = \pm C$ ($C>0$).
The wavefunction is  \lq intermediate', because it exhibits intermediate properties between extended and localized wavefunctions.  
}
\label{FIG-NPD-WFN}       
\end{figure}

\section{Summary and overview \label{SEC-SUMMARY}}

The present paper proposes
an a posteriori verification method for the generalized eigenvalue problems that appear in large-scale electronic state calculations. The verification procedure gives a rigorous mathematical foundation of numerical reliability.
In particular, 
the present result guarantees that all of the approximate eigenvalues $\{ \hat{\lambda}_k \}_k$ are well separated and that the participation ratio value $\{ P(\hat{x}_k) \}_k$ 
and any physical quantity defined for each eigenvector are meaningful.  
Since the verification procedure consists of simple matrix multiplications, 
the computational cost is moderate,
as compared with that of the solver procedure.
Therefore, application researchers can use the verification function with only a moderate increase of the computational cost. 
Test calculations were carried out on the K computer for real problems with a matrix size of up to $n \approx 4 \times 10^5$.

The next stage of research is the integration
of the present verifier routine 
and solver routines in EigenKernel,
in which we can use various solver routines among ScaLAPACK and newer libraries and can compare their approximate solutions in the verification procedure. 

Future issues are 
realizing (i) the verification of eigenvectors and (ii) the refinement of approximate eigenpairs. 
The refinement procedure will be crucial, in particular, 
when lower-precision arithmetic, such as half-precision or single-precision arithmetic, is used for calculating an approximate solution as an initial guess.
For example, refinement algorithms for the symmetric eigenvalue problem have recently been proposed in \cite{OgAi2018,OgAi2019}, which are based on matrix multiplications.
Such refinement algorithms enhance application researchers to use lower-precision arithmetic with satisfactory reliability of the computed results, which will be of great importance in next-generation architecture that is optimized for lower-precision arithmetic.

\section*{Acknowledgement}
The authors wish to thank the anonymous referees for their valuable comments, which helped to improve our paper significantly.
The present study was supported in part by 
MEXT as Exploratory Issue 1-2 of the Post-K (Fugaku) computer project ``Development of verified numerical computations and super high-performance computing environment for extreme researches'' using computational resources of the K computer provided by the RIKEN R-CCS through the HPCI System Research project (Project ID: hp180222) and
Priority Issue 7 of the Post-K computer project
and by JSPS KAKENHI Grant Numbers 16KT0016, 17H02828, and 19H04125.

\appendix
\section{Generalized eigenvalue problems in electronic state calculations \label{SEC-GHEV}}

This section introduces a generalized eigenvalue problem
as a numerical foundation of large-scale electronic state calculations.
Details can be found in textbooks, such as Ref.~\cite{MARTIN2004-TEXT}. 
The fundamental Schr\"{o}dinger-type equation, which is a linear partial differential equation, 
is written for an electronic wave function $\phi(\bm{r})$ in real space for a position vector $\bm{r}=(x,y,z)$ as 
\begin{eqnarray}
 H \phi(\bm{r}) =\lambda \phi(\bm{r})
 \label{EQ-QM-EQN}
\end{eqnarray}
with the Hamilton operator
\begin{eqnarray}
 H \equiv - \frac{\hbar^2}{2m} \Delta + V_{\rm eff}(\bm{r}).
\end{eqnarray}
Here, 
$\Delta = \partial_x^2 + \partial_y^2 + \partial_z^2 $ is the Laplacian, 
$m$ is the mass of the electron, $\hbar$ ($\approx 1.05 \time 10^{-34}$ Js) is the Planck constant, and $V_{\rm eff}(\bm{r})$ is the effective potential, which is a scalar function. 
The normalization condition
\begin{eqnarray}
\int |\phi(\bm{r})|^2 = 1  
 \label{EQ-NORMALIZAION}
\end{eqnarray}
is imposed and stems from the fact that the sum of the weight distribution of one electron 
should be unity.  

An eigenvalue $\lambda$ that indicates the energy of an electron in the material is called an eigenenergy. 
The $k$-th eigenpair $(\lambda_k, \phi_k(\bm{r}))$ is defined for $k=1,2,..,n$
in the order of $\lambda_1 \le \lambda_2 \le \cdots \le \lambda_n$.

Now, we consider as a typical case that  
$\phi(\bm{r})$ is expressed as a linear combination of given basic functions 
\begin{eqnarray}
 \phi(\bm{r}) = \sum_{j}^{n} c_j \chi_j(\bm{r}) 
 \label{EQ-QM-LCAO},
\end{eqnarray}
where the basis functions $\{ \chi_j(\bm{r}) \}$ are normalized to be
\begin{eqnarray}
\int \chi^\ast_j(\bm{r})  \chi_j(\bm{r}) d\bm{r} =1.
\end{eqnarray}
A typical basis function is called atomic orbital
and is localized near the position of an atomic nucleus. 
Since each basis function belongs to one atom,
the basis index $i$ is equivalent to the composite indices
of an atom index $\mathcal{I}$ and an orbital index $\alpha$ ($i \equiv (\mathcal{I}, \alpha)$).
The orbital index $\alpha$ distinguishes the basis functions that belong to the same atom but differ in shape. 
Usually, the number of basis functions $n$ is nearly proportional to that of atoms $n_{\rm atom}$
($n \propto n_{\rm atom}$).

When (\ref{EQ-QM-LCAO}) is used for (\ref{EQ-QM-EQN}),
the generalized eigenvalue problem (\ref{EQ-QM-GEP}) appears 
with the real-symmetric $n \times n$ matrices $A$ and $B$, where
\begin{eqnarray}
 A_{ij} &\equiv& \int \chi_i^\ast(\bm{r}) H  \chi_j(\bm{r}) d\bm{r} \\
 B_{ij} &\equiv& \int \chi_i^\ast(\bm{r})  \chi_j(\bm{r}) d\bm{r}. \label{EQ-B-MAT-INTEGRAL}
\end{eqnarray}
The matrix $B$ is positive definite and 
satisfies $B_{jj}=1$ and $| B_{ij}| < 1$($i \ne j$).
Hereafter, we consider 
that the basis functions are real and the matrices $A$ and $B$ are real symmetric. The eigenvectors $x_k$ are real, and the normalization condition of (\ref{EQ-NORMALIZAION}) is reduced to 
\begin{eqnarray}
x_k^{\rm T} B x_k = 1,
\label{EQ-B-NORMALIZATION-V}
\end{eqnarray}
which is called $B$-normalization.

Here, the simplest theory of the hydrogen molecule (H$_2$) is demonstrated,
as in many textbooks, such as Ref.~\cite{ATKINS}.
The atomic nuclei of the first and second hydrogen atoms
are located at  $\bm{r} = \bm{R}_1$ and $\bm{R}_2$, respectively. 
We consider a given localized function $f(\bm{r})$ with  
localization center located at $\bm{r}=0$.
Two basis functions $\chi_1(\bm{r})$ and $\chi_2(\bm{r})$ are given as 
\begin{eqnarray}
\chi_1(\bm{r}) \equiv f(\bm{r}-\bm{R}_1), \quad 
\chi_2(\bm{r}) \equiv f(\bm{r}-\bm{R}_2).
\end{eqnarray}
The generalized eigenvalue problem of 
(\ref{EQ-QM-GEP}) appears with
the 2 $\times$ 2 real symmetric matrices of 
\begin{eqnarray}
A \equiv 
\begin{pmatrix}
a & -t \\
-t & a 
\end{pmatrix}, 
\quad 
B \equiv 
\begin{pmatrix}
1 & s \\
s & 1 
\end{pmatrix}.
\end{eqnarray}
Now, we consider a typical case in which $a,t,s$ are  positive real numbers and $s<1$. 
The off-diagonal element of $t$ or $s$ is the function of the interatomic distance $d \equiv |\bm{R}_1 - \bm{R}_2| $ ($t=t(d), s=s(d)$). 
The eigenvalues are  
\begin{eqnarray}
\lambda _1 \equiv \frac{a -t}{1+s}, \quad \lambda _2 \equiv \frac{a+t}{1-s}
\end{eqnarray}
and the eigenvectors are 
\begin{eqnarray}
x_1 = 
\frac{1}{\sqrt{2(1+s)}}
\begin{pmatrix}
1  \\
1  
\end{pmatrix}, \quad
x_2 = 
\frac{1}{\sqrt{2(1-s)}}
\begin{pmatrix}
1  \\
-1  
\end{pmatrix}.
\end{eqnarray}
The matrix $B$ has the eigenvalues $1 \pm s$
and will be not positive definite
in the limiting situation of $s \rightarrow 1$.
One may suspect that
the limiting situation can appear, 
when the distance between the nuclei of atoms is approximately zero 
$(d \rightarrow 0)$ and that
the two basis functions will be identical
($\chi_1(\bm{r}) - \chi_2(\bm{r}) \rightarrow 0$).
Among real materials, fortunately,
the distance $d$ is so large that the limiting situation does not occur. 

In general, 
the matrix size $n$ in generalized eigenvalue problem (\ref{EQ-QM-GEP})
is nearly proportional to that of the atoms 
($n \propto n_{\rm atom}$).  
For example, 
a typical model for the benzene molecule (C$_6$H$_6$), called the \lq sp' model, 
gives one basis function for each hydrogen atom (H) and 
four basis functions for each carbon atom (C). 
The total number of basis functions or the matrix size $n$
is $n = 1 \times 6 + 4 \times 6 = 30$.

Matrix data of $A$ and $B$ for various materials are stored 
in the ELSES matrix library
\cite{ELSES-MATRIX-LIBRARY-URL, HOSHI2018-PENTA}.
The matrix data were generated 
by the electronic-state calculation software ELSES \cite{ELSES-URL, HOSHI2012-ELSES, HOSHI2016-SC16}
with first-principles-based modeled (tight-binding) electronic-state theory. 
The atomic unit is used for the energy among the data files. 
For example, 
the matrix problem for a benzene molecule (C$_6$H$_6$) in the above model 
is stored as \lq BNZ30'. 
For many problems,
the approximate eigenvalues 
$\{ \ap{\lambda}_k \}$  are uploaded,
as well as the matrix data of $A$ and $B$, for the convenience of the researcher.
The sparsity of the stored matrix data of $A_{ij}$ and $B_{ij}$ is explained briefly.
As explained above, the indices $i$ and $j$ are the composite indices of the atom indices $\mathcal{I}$ 
and $J$ and the orbital indices $\alpha$ and $\beta$, respectively ($i \equiv i(I,\alpha), j \equiv j(J,\beta))$).
Therefore, an element of the matrices $A$ and $B$ is expressed by the four indices as $A_{I\alpha;J\beta}$ and $B_{I\alpha;J\beta}$, respectively.
Since a matrix element value decreases quickly and monotonically 
as a function of the inter-atomic distance between the $I$-th and $J$-th atoms ($r_{IJ}$), a cutoff distance $r_{\rm cut}$ can be introduced.
A matrix element, $A_{I\alpha;J\beta}$ or $B_{I\alpha;J\beta}$, is ignored, if $r_{IJ} > r_{\rm cut}$, which makes the matrices sparse.
More information on the data file in the ELSES matrix library is found in Ref.~\cite{HOSHI2018-PENTA}.

As a future issue,
we should consider the numerical error 
in the input matrix data of $A$ and $B$,
because this error may affect the final conclusion.
The possible numerical error can be decomposed into the two terms 
\begin{eqnarray}
A_{\rm exact} - A = \delta A_{\rm cut} + \delta A_{\rm cal} \\
B_{\rm exact} - B = \delta B_{\rm cut} + \delta B_{\rm cal}, 
\end{eqnarray}
where $A_{\rm exact}$ and $B_{\rm exact}$ are the exact (theoretical) matrix data.
The error terms $\delta A_{\rm cut}$ and $\delta B_{\rm cut}$ are called cutoff errors 
and stem from the cutoff procedure explained in the previous paragraph.
The maximum element of $\delta A_{\rm cut}$ or $\delta B_{\rm cut}$ is on the order of $10^{-4}$ in the case of PPE17994, for example. 
The cutoff error term will be eliminated
when the full matrix data are adopted in the input matrices.  
The full matrix data are not stored in the ELSES matrix library, because these data consume a large amount of disk space.
We intend to perform verification using the full matrix data when we integrate the verifier routines into the simulation software (ELSES), which generates the matrix data and includes the solver routine. 
The procedure with the full matrix data will not increase the computational cost, 
because all of the procedures use the dense-matrix algorithms. 
The error terms $\delta A_{\rm cal}$ and $\delta B_{\rm cal}$, on the other hand,   
are called calculation errors and stem from the generating procedure of the matrices. 
The matrix $B$ is defined as the integral in (\ref{EQ-B-MAT-INTEGRAL}). 
The three-dimensional numerical integral of (\ref{EQ-B-MAT-INTEGRAL}) is obtained for the Slater-type functions $\{ \chi_i (\bm{r}) \}_i$ \cite{SLATER1930}
in prolate spheroidal coordinates,
which is reviewed in Section II of Ref.~\cite{LESIUK2014}.
The matrix $A$ is calculated from the matrix $B$ 
in a modeled (atom superposition and electron delocalization tight-binding) theory 
\cite{NATH1990, CALZAFFERI1996, HOSHI2012-ELSES}.
Evaluating the calculation errors $\delta A_{\rm cal}$ and $\delta B_{\rm cal}$ is an interesting topic and will be discussed in the near future.

\bibliographystyle{elsarticle-num} 
\bibliography{bibtex_list}

\begin{thebibliography}{10}
\expandafter\ifx\csname url\endcsname\relax
  \def\url#1{\texttt{#1}}\fi
\expandafter\ifx\csname urlprefix\endcsname\relax\def\urlprefix{URL }\fi
\expandafter\ifx\csname href\endcsname\relax
  \def\href#1#2{#2} \def\path#1{#1}\fi

\bibitem{EIGENKERNEL-URL}
{EigenKernel}, \url{https://github.com/eigenkernel/}.

\bibitem{IMACHI2016-JIP}
H.~Imachi, T.~Hoshi, Hybrid numerical solvers for massively parallel eigenvalue
  computations and their benchmark with electronic structure calculations, J.
  Inf. Process 24 (2016) 164--172.

\bibitem{2018TANAKA}
K.~Tanaka, H.~Imachi, T.~Fukumoto, T.~Fukaya, Y.~Yamamoto, T.~Hoshi,
  {EigenKernel - A middleware for parallel generalized eigenvalue solvers to
  attain high scalability and usability}, Japan J. Indust. Appl. Math. 36
  (2019) 719--742.

\bibitem{Ya2001}
N.~Yamamoto, A simple method for error bounds of eigenvalues of symmetric
  matrices, Linear Algebra Appl. 324~(1--3) (2001) 227--234.

\bibitem{MiOgRuOi2010}
S.~Miyajima, T.~Ogita, S.~M. Rump, S.~Oishi, Fast verification for all
  eigenpairs in symmetric positive definite generalized eigenvalue problem,
  Reliable Computing 14 (2010) 24--45.

\bibitem{Mi2012}
S.~Miyajima, Numerical enclosure for each eigenvalue in generalized eigenvalue
  problem, J.\ Comput.\ Appl.\ Math. 236~(9) (2012) 2545--2552.

\bibitem{Ya1984}
T.~Yamamoto, Error bounds for approximate solutions of systems of equations,
  Japan J.\ Appl.\ Math. 1~(1) (1984) 157--171.

\bibitem{DONGARRA2018-HPCASIA}
J.~Dongarra, Issue and solutions for extreme scale computing, in: {HPC Asia}
  2018, {Tokyo}, {Japan}, 2018.

\bibitem{ALVERMANN2018}
A.~Alvermann, A.~Basermann, H.-J. Bungartz, C.~Carbogno, D.~Ernst, H.~Fehske,
  Y.~Futamura, M.~Galgon, G.~Hager, S.~Huber, T.~Huckle, Ida, A.~Imakura,
  M.~Kawai, S.~K\"ocher, M.~Kreutzer, P.~Kus, B.~Lang, H.~Lederer, V.~Manin,
  A.~Marek, K.~Nakajima, L.~Nemec, K.~Reuter, M.~Rippl, M.~R\"ohrig-Z\"ollner,
  T.~Sakurai, M.~Scheffler, C.~Scheurer, F.~Shahzad, D.~S. Brambila, J.~Thies,
  G.~Wellein, Benefits from using mixed precision computations in the
  {ELPA-AEO} and {ESSEX-II} eigensolver projects, Japan J. Indust. Appl. Math.
  36 (2019) 699--717.

\bibitem{HOSHI2018-PENTA}
T.~Hoshi, H.~Imachi, A.~Kuwata, K.~Kakuda, T.~Fujita, H.~Matsui, Numerical
  aspect of large-scale electronic state calculation for flexible device
  material, Japan J. Indust. Appl. Math. 36 (2019) 685--698.

\bibitem{BELL1970}
R.~Bell, P.~Dean, Atomic vibrations in vitreous silica, Disc. Faraday Soc. 50
  (1970) 55--61.

\bibitem{1996FUJIWARA}
T.~Fujiwara, T.~Mitsui, S.~Yamamoto, Scaling properties of wave functions and
  transport coefficients in quasicrystals, Phys. Rev. B 53 (1996) R2910--R2913.

\bibitem{SCALAPACK-URL}
{ScaLAPACK}, \url{http://www.netlib.org/scalapack/}.

\bibitem{SCALAPACK-BOOK}
L.~S. Blackford, J.~Choi, A.~Cleary, E.~D'Azevedo, J.~Demmel, I.~Dhillon,
  J.~Dongarra, S.~Hammarling, G.~Henry, A.~Petitet, K.~Stanley, D.~Walker,
  R.~C. Whaley, ScaLAPACK Users' Guide, {Society for Industrial and Applied
  Mathematics}, 1997.

\bibitem{ELPA-URL}
{ELPA}(= eigenvalue solvers for petaflop-applications),
  \url{https://elpa.mpcdf.mpg.de/}.

\bibitem{ELPA2014}
A.~Marek, V.~Blum, R.~Johanni, V.~Havu, B.~Lang, T.~Auckenthaler, A.~Heinecke,
  H.~Bungartz, H.~Lederer, The {ELPA} library - scalable parallel eigenvalue
  solutions for electronic structure theory and computational science, J. Phys.
  Condens. Matter 26 (2014) 213201.

\bibitem{EIGENEXA-URL}
{EigenExa}, \url{http://www.r-ccs.riken.jp/labs/lpnctrt/en/projects/eigenexa/}.

\bibitem{EigenExa-PAPER}
T.~Imamura, Y.~Hirota, T.~Fukaya, S.~Yamada, M.~Machida, Eigenexa: high
  performance dense eigensolver, present and future, in: {8th International
  Workshop on Parallel Matrix Algorithms and Applications (PMAA14), Lugano,
  Switzerland}, 2014.

\bibitem{FHI-AIM-URL}
F.~{Fritz Haber Institute {\it ab initio} molecular simulations}),
  \url{https://aimsclub.fhi-berlin.mpg.de/}.

\bibitem{FHI-AIM-PAPER}
V.~Blum, R.~Gehrke, F.~Hanke, P.~Havu, V.~Havu, X.~Ren, K.~Reuter,
  M.~Scheffler, {\it Ab initio} molecular simulations with numeric
  atom-centered orbitals, Comp. Phys. Comm. 180 (2009) 2175--2196.

\bibitem{Ru1999}
S.~M. Rump, Fast and parallel interval arithmetic, BIT Numer.\ Math. 39~(3)
  (1999) 539--560.

\bibitem{C-XSC}
{C-XSC -- A C++ Class Library for Extended Scientific Computing, ver.\ 2.5.4},
  \url{http://www.xsc.de/}.

\bibitem{INTLAB}
S.~M. Rump, {INTLAB -- INTerval LABoratory}, in: T.~Csendes (Ed.), Developments
  in Reliable Computing, Kluwer Academic Publishers, Dordrecht, 1999, pp.
  77--104.

\bibitem{Wi1961}
J.~H. Wilkinson, Rigorous error bounds for computed eigensystem, Computer J. 4
  (1961) 230--24.

\bibitem{LeHoSoMiZh2018}
D.~Lee, T.~Hoshi, T.~Sogabe, Y.~Miyatake, S.-L. Zhang, Solution of the $k$-th
  eigenvalue problem in large-scale electronic structure calculations, J.\
  Comp.\ Phys 371 (2018) 618--632.

\bibitem{Wi1965}
J.~H. Wilkinson, The Algebraic Eigenvalue Problem, Clarendon Press, Oxford,
  1965.

\bibitem{ELSES-URL}
{ELSES}(= {E}xtra-{L}arge-{S}cale {E}lectronic {S}tructure calculation),
  \url{http://www.elses.jp/}.

\bibitem{HOSHI2012-ELSES}
T.~Hoshi, S.~Yamamoto, T.~Fujiwara, T.~Sogabe, S.-L. Zhang, An order-{$N$}
  electronic structure theory with generalized eigenvalue equations and its
  application to a ten-million-atom system, J. Phys.: Condens. Matter (2012)
  24/165502, 1--5.

\bibitem{HOSHI2016-SC16}
T.~Hoshi, H.~Imachi, K.~Kumahata, M.~Terai, K.~Miyamoto, K.~Minami, F.~Shoji,
  Extremely scalable algorithm for 10$^8$-atom quantum material simulation on
  the full system of the k computer, Proc. ScalA16 in SC16 (2016) 33--40.

\bibitem{ELSES-MATRIX-LIBRARY-URL}
{ELSES} {M}atrix {L}ibrary, \url{http://www.elses.jp/matrix/}.

\bibitem{HOSHI2013-JPSJ}
T.~Hoshi, Y.~Akiyama, T.~Tanaka, T.~Ohno, Ten-million-atom electronic structure
  calculations on the {K} computer with a massively parallel order-{N} theory,
  J. Phys. Soc. Jpn. 82 (2013) 023710/1--4.

\bibitem{OgAi2018}
T.~Ogita, K.~Aishima, Iterative refinement for symmetric eigenvalue
  decomposition, Japan J.\ Indust.\ Appl.\ Math. 35~(3) (2018) 1007--1035.

\bibitem{OgAi2019}
T.~Ogita, K.~Aishima, Iterative refinement for symmetric eigenvalue
  decomposition {II}: clustered eigenvalues, Japan J.\ Indust.\ Appl.\ Math.
  36~(2) (2019) 435--459.

\bibitem{MARTIN2004-TEXT}
R.~M. Martin, Electronic Structure - Basic Theory and Practical Methods,
  {Cambridge University Press}, 2004.

\bibitem{ATKINS}
P.~Atkins, R.~Friedman, Molecular Quantum Mechanics, 5th Ed., {Oxford
  University Press}, 2005.

\bibitem{SLATER1930}
J.~C. Slater, Atomic shielding constants, Phys. Rev. 36 (1930) 57--64.

\bibitem{LESIUK2014}
M.~Lesiuk, R.~Moszynski, Reexamination of the calculation of two-center,
  two-electron integrals over {S}later-type orbitals. {I}. {C}oulomb and hybrid
  integrals, Phys. Rev. E 90 (2014) 063318/1--13.

\bibitem{NATH1990}
K.~Nath, A.~B. Anderson, Atom-superposition and electron-delocalization
  tight-binding band theory, Phys. Rev. B 41 (1990) 5652--5660.

\bibitem{CALZAFFERI1996}
G.~Calzaferri, R.~Rytz, The band structure of diamond, J. Phys. Chem. 100
  (1996) 11122--11124.

\end{thebibliography}

\end{document}